\documentclass[aps,pra,twocolumn,superscriptaddress,longbibliography,floatfix]{revtex4-2}

\usepackage{amsmath,amssymb,amsthm}
\newtheorem{theorem}{Theorem}[section]

\newtheorem{proposition}[theorem]{Proposition}
\usepackage{graphicx}
\usepackage{xcolor}
\usepackage{booktabs}
\usepackage{hyperref}
\usepackage{bm}
\usepackage{array}
\usepackage{tikz}
\usetikzlibrary{shapes.geometric,positioning,calc,arrows.meta}
\usepackage[caption=false]{subfig}

\hypersetup{colorlinks=true, linkcolor=blue, citecolor=blue, urlcolor=blue}

\newcommand{\vt}{\vartheta}
\newcommand{\al}{\alpha}
\newcommand{\De}{\Delta}
\newcommand{\Tr}{\operatorname{Tr}}
\newcommand{\lmax}{\lambda_{\max}}
\newcommand{\lmin}{\lambda_{\min}}
\newcommand{\QC}{\ensuremath{\text{Quad-}C_5}}

\tikzset{
  hub/.style  = {circle, draw=black, fill=black!75, text=white,
                 font=\footnotesize\bfseries, inner sep=1.8pt, minimum size=18pt},
  leaf/.style = {circle, draw=black, fill=black!30, text=white,
                 font=\footnotesize\bfseries, inner sep=1.8pt, minimum size=18pt},
  gedge/.style= {thick},
  diagedge/.style = {thick, dashed},
}

\begin{document}

\title{The Quad-$C_5$ Graph: Maximum Contextuality Gap on Eight Vertices}

\author{Ugur Tamer}
\email{utamer24@ku.edu.tr}
\affiliation{Department of Physics, Ko\c{c} University, 34450 Sar{\i}yer, Istanbul, T\"{u}rkiye}

\author{\"{O}zg\"{u}r E.\ M\"{u}stecapl\i{}o\u{g}lu}
\email{omustecap@ku.edu.tr}
\affiliation{Department of Physics, Ko\c{c} University, 34450 Sar{\i}yer, Istanbul, T\"{u}rkiye}
\affiliation{T\"{U}B\.{I}TAK Research Institute for Fundamental Sciences (TBEA), 41470 Gebze, T\"{u}rkiye}

\author{Alper Dizdar}
\email{dizdar@istanbul.edu.tr}
\affiliation{Department of Physics, Faculty of Science,
University of Istanbul, 34134 Vezneciler, Istanbul, T\"{u}rkiye}

\author{Zafer Gedik}
\email{zafer.gedik@sabanciuniv.edu}
\affiliation{Faculty of Engineering and Natural Sciences,
Sabanc{\i} University, Tuzla, Istanbul 34956, T\"{u}rkiye}

\date{\today}

\begin{abstract}
Quantum contextuality rules out classical descriptions in which measurement
outcomes are independent of the compatible measurements performed alongside
them. While the KCBS pentagon provides the minimal qutrit test, identifying
larger but still compact contextuality witnesses remains important for
understanding how contextuality can be organized, amplified, and benchmarked.
We solve the eight-event version of this problem within the graph-theoretic
framework for contextuality. An exhaustive search over all 11{,}117 connected
non-isomorphic graphs with eight vertices identifies a sparse ten-edge graph,
which we call \QC, as the maximum-gap witness. The graph has a larger
absolute contextuality gap than the Wagner graph, a standard eight-vertex
benchmark, despite having fewer exclusivity relations. Its structure is transparent: it is formed from four
overlapping KCBS pentagons, with each edge shared by two pentagons.
We prove that \QC\ is already contextual for a single qutrit by giving
an explicit three-dimensional construction. In this qutrit realization its
contextuality margin equals that of KCBS, whereas its full quantum advantage is
reached numerically in four dimensions and exceeds that of the Wagner graph.
\QC\ therefore provides a compact bridge between minimal qutrit contextuality
and stronger four-dimensional contextuality witnesses.
\end{abstract}

\maketitle

\section{Introduction}

Quantum contextuality~\cite{KochenSpecker1967} captures an irreducible
tension between quantum mechanics and classical realism: measurement outcomes
cannot, in general, be assigned pre-existing values independently of which
compatible measurements are performed alongside them.
Unlike Bell nonlocality, contextuality does not require spatially separated
subsystems and can already appear in a single finite-dimensional quantum
system. This makes it both a foundational signature of nonclassicality and a
resource for quantum information processing, with established connections to
quantum computation, communication, and
sensing~\cite{Budroni2022,Howard2014,Raussendorf2013}.

Identifying simple yet strong contextuality witnesses is a central goal in
this program. The Klyachko--Can--Binicioğlu--Shumovsky (KCBS) construction
demonstrated that five cyclically compatible measurements on a single spin-1
system are sufficient for a state-dependent contextuality
test~\cite{Klyachko2002,Klyachko2008}. This pentagonal structure became the canonical
minimal qutrit example and was subsequently embedded in the broader family
of $n$-cycle contextuality scenarios, where noncontextual inequalities and
their quantum violations admit analytic characterization~\cite{Araujo2013}.
Such closed-form results are rare: separating classical, quantum, and more
general probabilistic correlations is, in general, a hard problem.

Most experimental contextuality tests have focused on minimal or highly
structured scenarios, such as the KCBS pentagon for qutrits and $n$-cycle
inequalities~\cite{Araujo2013}.
These examples are essential as foundational tests, but recent
resource-theoretic developments motivate a broader question: how can
contextuality be organized, amplified, and benchmarked in larger but still
experimentally manageable exclusivity structures?
This question is especially relevant because contextuality has been connected
to advantages in quantum computation~\cite{Howard2014,Raussendorf2013},
communication~\cite{Gupta2023}, and metrology~\cite{Lostaglio2020}.
From this viewpoint, small-graph optimization plays a role analogous to the
search for compact Bell inequalities or small quantum error-correcting codes:
one seeks the strongest nonclassical signature under limited resources.
Here we address the eight-vertex absolute-gap optimization problem beyond
the analytically understood cycle family and the standard Wagner
benchmark.

A systematic approach to this optimization is provided by the graph-theoretic
framework of Cabello, Severini, and Winter~\cite{Cabello2014}. Each
measurement event is represented as a vertex, and mutually exclusive events
are connected by edges, yielding an exclusivity graph $G=(V,E)$. For
projective measurements, the total event probability satisfies
\begin{equation}
  \al(G) \leq \sum_{i\in V}p_i \leq \vt(G),
  \label{eq:csw_intro}
\end{equation}
where $\al(G)$ is the independence number, giving the noncontextual bound,
and $\vt(G)$ is the Lov\'{a}sz theta number~\cite{Lovasz1979}, giving the
projective quantum bound~\cite{Cabello2014}. The contextuality gap
\begin{equation}
  \De(G) = \vt(G) - \al(G)
  \label{eq:gap_intro}
\end{equation}
measures the separation between quantum and noncontextual descriptions within
this graph model; a larger gap provides a stronger contextuality
resource~\cite{Budroni2022}.

Maximizing the contextuality gap for a fixed vertex count is a nontrivial
combinatorial problem. Adding edges introduces additional exclusivity
constraints that can suppress the quantum value, yet may also lower the
classical bound and reorganize the geometry of the witness.
Consequently, the densest or most symmetric graph need not yield the
strongest contextuality, and even the eight-vertex case presents a
non-obvious optimization landscape.

Known examples chart the structure of this hierarchy. The pentagon $C_5$
realizes the KCBS witness, and the seven-cycle $C_7$ belongs to the
analytically characterized $n$-cycle family~\cite{Araujo2013}. At eight
vertices, the Wagner graph has served as a standard non-cycle
benchmark~\cite{Cabello2014,Lovasz1979}. The full eight-vertex landscape,
however, extends well beyond these symmetric examples, and it is not a priori
clear whether any of these known graphs achieves the global maximum gap or
whether a different, less symmetric structure outperforms them all.

A closely related graph-enumeration program was used in the study of fully
contextual correlations, where the independence number, Lov\'{a}sz theta
number, and fractional packing number were computed for nonisomorphic graphs
with fewer than $11$ vertices~\cite{Amselem2012}. This database was later
used to identify experimentally accessible graphs maximizing the ratio
$\vt(G)/\al(G)$, a relative measure of contextuality
strength~\cite{Canas2016}.
The ratio $\vt(G)/\al(G)$ has since been studied as a graph-based
dimension witness~\cite{Ray2021}, and the CSW exclusivity-graph
framework has been extended to linear games and Bell
inequalities~\cite{Gnacinski2016}.
Building on this graph-theoretic viewpoint, we
focus on a complementary optimization criterion: the absolute CSW gap
$\De(G)=\vt(G)-\al(G)$. This criterion selects compact witnesses with the
largest absolute quantum advantage, determined by the largest additive
separation between the quantum and noncontextual bounds. We use it to
identify the optimal connected eight-vertex graph and then analyze the
structure, dimension dependence, and noise robustness of the resulting
witness. Maximizing $\vt(G)/\al(G)$ favors large relative violations, often
with smaller classical bounds, whereas maximizing $\vt(G)-\al(G)$ favors
large additive separations between the quantum and noncontextual limits.
These two objectives need not select the same graph, and at eight vertices
they do not.

We resolve this question by an exhaustive semidefinite-programming search
over all connected eight-vertex graphs. The numerically certified global maximum is attained by a
ten-edge graph, which we call \QC, that outperforms both the Wagner graph
and every cycle in this vertex range. Its structure is transparent: four
overlapping KCBS pentagons share edges in a regular pattern, providing a
compact mechanism by which KCBS-type contextuality is efficiently combined
within a larger exclusivity graph.

We additionally determine the minimal Hilbert-space dimension needed to
realize this advantage. An explicit three-dimensional orthogonal
representation establishes that \QC\ is already qutrit-contextual, and
numerical evidence shows that its full graph-theoretic quantum value is
first reached in four dimensions. The graph thus bridges minimal qutrit
contextuality and the regime of stronger four-dimensional witnesses;
we compare its noise robustness against both KCBS and Wagner.

The paper is organized as follows. Section~II reviews the graph-theoretic
background and the dimension-restricted quantum quantities used in the
analysis. Section~III describes the exhaustive graph search and numerical
certification. Section~IV presents the \QC\ graph and its structural
properties. Section~V analyzes dimension dependence, noise robustness, and
comparisons with KCBS and Wagner. Section~VI discusses experimental
implementation strategies and platform requirements. Section~VII summarizes
the conclusions. Appendix~A provides the explicit qutrit construction
establishing the analytic lower bound. Appendix~B collects supplementary
tables including the full edge lists, orthogonal representations, and
numerical certification data.

\section{Mathematical Background}

\subsection{Lov\'{a}sz Theta and Semidefinite Programming}

For a graph $G$ on $n$ vertices with edge set $E$, the Lov\'{a}sz theta
function is given by the semidefinite program~\cite{Lovasz1979,Knuth1994}
\begin{equation}
  \vt(G) = \max_{\substack{X\succeq 0,\;\Tr(X)=1\\
            X_{ij}=0\;\forall(i,j)\in E}}\;
            \mathbf{1}^\top X\,\mathbf{1},
  \label{eq:sdp}
\end{equation}
over the cone of $n\times n$ positive semidefinite matrices.
Equivalently, via the orthonormal representation~\cite{Lovasz1979},
\begin{equation}
  \vt(G) = \max_{|\psi\rangle,\{|v_i\rangle\}}
           \sum_{i=1}^n |\langle\psi|v_i\rangle|^2,
  \label{eq:orth_rep}
\end{equation}
where $\{|v_i\rangle\}\subset\mathbb{C}^d$ are unit vectors with
$\langle v_i|v_j\rangle=0$ for $(i,j)\in E$, and $|\psi\rangle$ is a
handle state.
The Lov\'{a}sz sandwich theorem~\cite{Lovasz1979} gives
$\al(G)\leq\vt(G)\leq\chi(\bar{G})$.
The relaxation over real vectors is tight: $\vt(G)$ equals the maximum
in Eq.~\eqref{eq:orth_rep} even when restricted to $|v_i\rangle\in\mathbb{R}^d$.

\subsection{Quantum Dimension Hierarchy}

The maximum quantum advantage achievable in Hilbert-space dimension $d$
with real orthonormal representations is
\begin{equation}
  \eta_d(G) = \max_{\substack{\{|v_i\rangle\}\subset\mathbb{R}^d,\;
              \|v_i\|=1,\\\langle v_i|v_j\rangle=0\;\forall(i,j)\in E}}
  \lmax\!\!\left(\sum_{i=1}^n |v_i\rangle\!\langle v_i|\right),
  \label{eq:eta_d}
\end{equation}
with $\eta_d(G)\nearrow\vt(G)$ as $d\to n$.
The minimum dimension $d^*$ is
\begin{equation}
  d^*(G) = \min\{d : \eta_d(G)>\al(G)\}.
\end{equation}
Throughout this paper we restrict to rank-one projectors as in the
standard CSW formulation; contextuality witnesses using projectors of
nonunit rank have been studied in Ref.~\cite{Xu2021}.

We distinguish two witness types:
\begin{itemize}
  \item \textbf{Qutrit witness} (here: $d^*=3$, i.e., a graph for which
        $\eta_3(G)>\al(G)$): A single three-level system suffices.
        Note this usage is broader than ``state-independent contextuality
        for spin-1 systems''; we mean only that $d^*=3$.
  \item \textbf{(Numerically indicated) pure two-qubit witness}:
        $\eta_3(G)=\al(G)$ under exhaustive numerical search, and
        $\eta_4(G)>\al(G)$, indicating $d^*=4$.
        Because $\eta_d$ optimization is non-convex, the upper bound
        $\eta_3\leq\al$ for specific graphs carries the caveat discussed
        in the Dimension Classification subsection.
\end{itemize}

\section{Methods}

\subsection{Exhaustive Graph Enumeration}

McKay's database~\cite{McKay2014} (\texttt{graph8.g6}) contains all
12{,}346 non-isomorphic simple graphs on $n=8$ vertices;
of these, 11{,}117 are connected (OEIS A001349).
We retrieved all graphs, filtered for connectivity, and applied the
analysis pipeline to the connected subset.
Non-isomorphic graphs on $n=5$ and $n=7$ were obtained from NetworkX's
Graph Atlas~\cite{Hagberg2008}.

Although related graph databases exist in the
literature~\cite{Amselem2012,Canas2016,McKay2014}, we performed an
independent enumeration and SDP calculation in order to optimize
$\De(G)=\vt(G)-\al(G)$, verify the ranking, and obtain primal--dual
numerical certificates for the leading graphs.

\subsection{Independence Number}

For each graph $G$ we compute $\al(G)=\omega(\bar{G})$ via maximum-weight
clique search on the complement, implemented in NetworkX.
For $n\leq8$ this is exact; we cross-validated all 11{,}117 results by
independent brute-force enumeration of all $2^8=256$ vertex subsets,
finding perfect agreement across every graph.

\subsection{Lov\'{a}sz Theta via SDP}
\label{sec:method_sdp}

We solve Eq.~\eqref{eq:sdp} using CVXPY~\cite{Diamond2016} with the
SCS solver~\cite{ODonoghue2016} for bulk screening of all 11{,}117
connected graphs; SCS provides typical precision of $10^{-4}$--$10^{-6}$,
sufficient for ranking.
The top 50 candidates were re-solved with the CLARABEL solver
(precision $\sim\!10^{-10}$--$10^{-12}$), whose values are reported
in all tables.
For every graph in the top 50, the SCS--CLARABEL discrepancy is
$|\vt_\text{SCS}-\vt_\text{CLARABEL}|<10^{-7}$ and the CLARABEL primal
feasibility residual satisfies $r_p<4\times10^{-10}$, providing
numerical evidence against artefacts.
The bulk scan completed in under four minutes on a standard laptop CPU.

To test the robustness of the ranking, we recomputed the leading candidates
with two independent conic solvers and repeated the high-accuracy calculations
five times. The resulting numerical ranges for the top three graphs are well
separated: the lower endpoint for \QC\ exceeds the upper endpoint for the
second-ranked graph by $2.94\times10^{-2}$ (Table~\ref{tab:top3cert}; see also Table~\ref{tab:intervals} in Appendix~\ref{app:tables}), many orders of magnitude larger than the reported
primal and dual residuals. We therefore regard the ordering of the leading
graphs, and in particular the uniqueness of \QC\ as the eight-vertex maximizer,
as numerically certified at the reported precision. A fully formal certificate
would require exact arithmetic or independently verifiable interval
primal--dual SDP certificates.

\begin{table*}[tp]
  \caption{Primal--dual certification for the top-3 connected $n=8$
           graphs (CLARABEL solver).
           $\vt_\text{CL}$: primal objective value;
           $r_p = |\mathrm{Tr}\,X-1|$: trace-constraint residual;
           $r_\text{edge} = \max_{(i,j)\in E}|X_{ij}|$: maximum
           edge-constraint violation;
           $\lmin(X)$: smallest eigenvalue of the primal matrix;
           $r_d = y - \vt_\text{CL}$: primal--dual objective gap;
           $\lmin(S)$: smallest eigenvalue of the dual slack matrix.
           Full top-20 data in Appendix~\ref{app:tables}.}
  \label{tab:top3cert}
  \begin{ruledtabular}
  \begin{tabular}{clrccccc}
    Rank & Note &
    $\vt_\text{CL}$ &
    \multicolumn{1}{c}{$r_p$} &
    \multicolumn{1}{c}{$r_\text{edge}$} &
    \multicolumn{1}{c}{$\lmin(X)$} &
    \multicolumn{1}{c}{$r_d$} &
    \multicolumn{1}{c}{$\lmin(S)$} \\
    \hline
     1 & \QC\        & 3.46784 & $+3.7\!\times\!10^{-17}$ & $+3.7\!\times\!10^{-17}$ & $-5.6\!\times\!10^{-9}$ & $-8.5\!\times\!10^{-9}$ & $-2.5\!\times\!10^{-8}$ \\
     2 &            & 3.43845 & $+1.2\!\times\!10^{-19}$ & $+1.2\!\times\!10^{-19}$ & $-1.3\!\times\!10^{-9}$ & $-1.2\!\times\!10^{-9}$ & $-6.8\!\times\!10^{-9}$ \\
     3 & Wagner $W$ & 3.41421 & $+2.1\!\times\!10^{-19}$ & $+2.1\!\times\!10^{-19}$ & $-5.7\!\times\!10^{-9}$ & $+2.5\!\times\!10^{-9}$ & $-1.2\!\times\!10^{-8}$ \\
  \end{tabular}
  \end{ruledtabular}
\end{table*}

\subsection{Dimension Classification via Constrained Optimization}
\label{sec:method_dim}

Computing $\eta_d(G)$ is a non-convex optimization problem that yields a
\emph{lower bound} on $\eta_d$.
We use a three-phase strategy with $N=300$ random restarts:
\begin{enumerate}
  \item \emph{Phase~1} (L-BFGS-B): minimize orthogonality violations
        $\mathcal{L}_1 = \sum_{(i,j)\in E}\langle\hat{v}_i|\hat{v}_j\rangle^2$.
  \item \emph{Phase~2} (L-BFGS-B, penalty method): maximize
        $\lmax\!\bigl(\sum_i|v_i\rangle\!\langle v_i|\bigr)$ with penalty
        weights $w_\perp=50{,}000$, $w_\lambda=0.3$, $w_\mathrm{norm}=5{,}000$.
  \item \emph{Phase~3} (SLSQP): refine with exact equality constraints.
\end{enumerate}
The best $\lmax$ over all restarts is a lower bound on $\eta_d$.
For graphs where all 300 restarts converge to $\lmax=\al$ (the classical
bound), we report $\eta_d\leq\al$ as \emph{numerically indicated}; a
formal upper-bound certificate would require an SDP relaxation over
$d$-dimensional representations, which is left for future work.

\subsection{Closed-Form Identification via PSLQ}

To identify algebraic closed forms, we apply the PSLQ
algorithm~\cite{Bailey1999} at 50-decimal-digit precision
(\texttt{mpmath} library).
We validate each candidate polynomial $P$ of degree $k$ via the residual
test: the ratio
\begin{equation}
  r = |P(x)|/\bigl(|P'(x)|\cdot\varepsilon_{\rm SDP}\bigr)
\end{equation}
must satisfy $r\ll 1$ for the relation to be genuine rather than a
numerical coincidence ($\varepsilon_{\rm SDP}\approx10^{-15}$ is the
CLARABEL precision).
We search for integer relations of degree $k\leq4$ with coefficient bound
$|c_i|\leq10^6$; the unique solution at $k=2$ is $x^2-2x-4=0$.
PSLQ detection is not a formal algebraic proof but constitutes high-confidence
numerical identification~\cite{Bailey1999}; a formal proof would require
an independent algebraic construction or exact-arithmetic SDP.

\section{Results}
\label{sec:results}

\subsection{Reference Results: $n=5$ and $n=7$}

For $n=5$, the maximum contextuality gap is achieved by the five-cycle $C_5$
(the KCBS exclusion graph~\cite{Klyachko2008}):
\begin{equation}
  \al(C_5)=2,\quad\vt(C_5)=\sqrt{5},\quad\De(C_5)=\sqrt{5}-2\approx0.236.
\end{equation}
For $n=7$, the maximum is the seven-cycle $C_7$~\cite{Araujo2013}, which
we verify by exhaustive SDP search over all 853 connected non-isomorphic
$n=7$ graphs using the same pipeline:
\begin{equation}
  \vt(C_7)=\frac{7\cos(\pi/7)}{1+\cos(\pi/7)}\approx3.318,\quad
  \De(C_7)\approx0.318.
\end{equation}
Both $C_5$ and $C_7$ are qutrit witnesses ($d^*=3$).

\subsection{Exhaustive Search over $n=8$}

Table~\ref{tab:n8_top10} lists the top-10 graphs by $\De$ from the
exhaustive search.
\QC\ achieves $\De=0.46784$, exceeding the Wagner graph $W$
(Rank~3) by $0.46784-0.41421=0.05363$, while using only 10 edges compared
to $W$'s 12.

\begin{table}[tp]
  \caption{Top-10 connected graphs on $n=8$ vertices by contextuality gap
           $\De=\vt-\al$ (CLARABEL values), from exhaustive search over the
           11{,}117 connected non-isomorphic simple 8-vertex graphs.
           The column $\vt/\al$ is the relative contextuality measure
           used in earlier graph searches~\cite{Canas2016,Ray2021}; its $n{=}8$
           maximizer (Rank~7, $\dagger$, the CHSH-type graph discussed
           in Ref.~\cite{Canas2016})
           differs from the gap maximizer (\QC, Rank~1).}
  \label{tab:n8_top10}
  \begin{ruledtabular}
  \begin{tabular}{clrrrrrr}
    Rank & Note & \multicolumn{1}{c}{$|E|$} &
          \multicolumn{1}{c}{$\al$} &
          \multicolumn{1}{c}{$\vt$} &
          \multicolumn{1}{c}{$\De$} &
          \multicolumn{1}{c}{$\vt/\al$} \\
    \hline
    1  & \QC\ ($n{=}8$ gap max)        & 10 & 3 & 3.46784 & 0.46784 & 1.15595 \\
    2  &                              & 11 & 3 & 3.43845 & 0.43845 & 1.14615 \\
    3  & Wagner $W$                   & 12 & 3 & 3.41421 & 0.41421 & 1.13807 \\
    4  &                              & 11 & 3 & 3.37228 & 0.37228 & 1.12409 \\
    5  &                              & 12 & 3 & 3.37228 & 0.37228 & 1.12409 \\
    6  &                              & 10 & 3 & 3.37228 & 0.37228 & 1.12409 \\
    7  & $\dagger$ CHSH-type graph     & 16 & 2 & 2.34315 & 0.34315 & 1.17157 \\
    8  &                              & 11 & 3 & 3.33804 & 0.33804 & 1.11268 \\
    9  &                              & 17 & 2 & 2.33804 & 0.33804 & 1.16902 \\
    10 &                              & 18 & 2 & 2.33333 & 0.33333 & 1.16667 \\
  \end{tabular}
  \end{ruledtabular}
\end{table}

\subsection{Structure of the \QC\ Graph}

The graph6 canonical encoding~\cite{McKay2014} of this graph is
\verb|GCQb`o|, which we retain as a reproducibility identifier.
\QC\ has vertex set $V=\{0,1,\ldots,7\}$ and edge set
\begin{align}
  E = \{&(0,3),(0,5),(1,4),(1,6),(2,5),(2,6),\nonumber\\
         &(2,7),(3,6),(3,7),(4,7)\}.
  \label{eq:qc_edges}
\end{align}
Its degree sequence is $(2,2,2,2,3,3,3,3)$: vertices $\{0,1,4,5\}$ have
degree~2 (``leaf'' nodes) and vertices $\{2,3,6,7\}$ have degree~3
(``hub'' nodes).
The graph is neither regular nor vertex-transitive.
The four hub nodes $\{2,3,6,7\}$ form a complete bipartite subgraph
$K_{2,2}$ (bipartition $\{2,3\}\times\{6,7\}$), while the leaf nodes
attach as two pendant paths $3{-}0{-}5{-}2$ and $6{-}1{-}4{-}7$,
each bridging one hub pair through two degree-2 vertices.

Figure~\ref{fig:graphs} visualizes \QC\ alongside the Wagner
graph $W$ for comparison.

\begin{figure}[tp]
\centering
\begin{tikzpicture}[scale=0.88]

\begin{scope}[xshift=-2.2cm]
  \node[hub]  (v3) at (-1.5,  1.8) {3};
  \node[leaf] (v0) at (-1.5,  0.6) {0};
  \node[leaf] (v5) at (-1.5, -0.6) {5};
  \node[hub]  (v2) at (-1.5, -1.8) {2};
  \node[hub]  (v6) at ( 1.5,  1.8) {6};
  \node[leaf] (v1) at ( 1.5,  0.6) {1};
  \node[leaf] (v4) at ( 1.5, -0.6) {4};
  \node[hub]  (v7) at ( 1.5, -1.8) {7};
  \draw[gedge] (v3)--(v0)--(v5)--(v2);
  \draw[gedge] (v6)--(v1)--(v4)--(v7);
  \draw[gedge] (v3)--(v6);
  \draw[gedge] (v2)--(v7);
  \draw[gedge] (v3)--(v7);
  \draw[gedge] (v2)--(v6);
  \node[below=5pt, align=center, font=\footnotesize] at (0,-2.15)
    {(a)~\QC\\[1pt]
     $n{=}8,\;|E|{=}10,\;\De{=}0.4678$};
\end{scope}

\begin{scope}[xshift=2.4cm]
  \def\R{1.8}
  \foreach \k/\ang in {0/90,1/45,2/0,3/315,4/270,5/225,6/180,7/135}{
    \node[hub] (w\k) at (\ang:\R) {\k};
  }
  \draw[gedge] (w0)--(w1)--(w2)--(w3)--(w4)--(w5)--(w6)--(w7)--(w0);
  \draw[diagedge] (w0)--(w4) (w1)--(w5) (w2)--(w6) (w3)--(w7);
  \node[below=5pt, align=center, font=\footnotesize] at (0,-\R-0.2)
    {(b)~Wagner $W$\\[1pt]
     $n{=}8,\;|E|{=}12,\;\De{=}0.4142$};
\end{scope}

\end{tikzpicture}
\caption{%
  (a)~The optimal witness \QC\ ($\De=0.4678$, 10~edges).
  Dark (hub) nodes have degree~3; light (leaf) nodes have degree~2.
  The four hub nodes $\{2,3,6,7\}$ form a $K_{2,2}$ core;
  leaves attach via two pendant paths $3{-}0{-}5{-}2$ (left column)
  and $6{-}1{-}4{-}7$ (right column).
  (b)~The Wagner graph $W$ ($\De=0.4142$, 12~edges, 3-regular);
  dashed lines are the four ``long diagonals'' of the $C_8$ cycle.
  \QC\ surpasses $W$ by $13\%$ in contextuality gap with
  two fewer edges.}
\label{fig:graphs}
\end{figure}

A defining structural feature is that \QC\ contains exactly four
induced five-cycles (KCBS pentagons), listed in Table~\ref{tab:c5_subgraphs}.
Crucially, \emph{every} one of the 10 edges belongs to exactly two of the
four $C_5$ subgraphs---a perfect two-fold edge coverage.

The adjacency spectrum of \QC\ is displayed in
Table~\ref{tab:spectrum} in Appendix~\ref{app:tables}.
Six of the eight eigenvalues belong to the golden-ratio field
$\mathbb{Q}(\sqrt{5})$: $\{1+\varphi,\; \varphi^{-1}(\times 2),\;
1-\varphi^{-1},\; -\varphi(\times 2)\}$, where $\varphi=(1+\sqrt{5})/2$.
This is a direct spectral fingerprint of the embedded $C_5$ subgraphs.
The remaining two eigenvalues ($\approx\!+1.303$ and $\approx\!-2.303$)
do not lie in $\mathbb{Q}(\sqrt{5})$; their structural origin within
the graph is an open question.

\subsection{Edge Lists for the Top-6 Graphs}

Complete edge lists for reproducibility are given in
Table~\ref{tab:edgelists} in Appendix~\ref{app:tables}.

\subsection{Dimension Classification}
\label{sec:dim_class}

Table~\ref{tab:dim_class} in Appendix~\ref{app:tables} presents $\eta_3$ and dimension classifications.
Ranks~1, 2, 4, and~6 are confirmed qutrit witnesses: the 300-restart
optimization finds configurations with $\lmax>3$, providing a constructive
lower bound $\eta_3>\al$.
The Wagner graph (Rank~3) and Rank~5 return $\lmax=3.000$ across all 300
restarts, consistent with $\eta_3=\al=3$ (numerically indicated pure
two-qubit witnesses).

We note that the claim $\eta_3(W)=\al=3$ is a \emph{lower-bound} result:
the optimization confirms that the best 3D representation found attains
$\lmax=\al$, but without an SDP upper-bound verification one cannot rule out
a 3D configuration with $\lmax>3$ that the heuristic failed to find.
For the Wagner graph, the high regularity and well-studied symmetry
($\text{Aut}(W)\cong D_4\times\mathbb{Z}_2$, order 16) make the numerical
finding credible, and a formal $d^*=4$ proof remains an open question.

The numerically indicated $\eta_3$ value $1+\sqrt{5}\approx3.236$ appears
for Ranks~1, 2, and~4; its analytic lower bound is established in the
next subsection.

\subsection{Orthogonal Vector Representations}

Table~\ref{tab:orth_reps} in Appendix~\ref{app:tables} summarizes the achieved $\lmax$ values for the
key configurations.
For \QC\ at $d=3$, all 300 restarts converge to $\lmax=3.236068$ with
maximum orthogonality error $<2.3\times10^{-14}$, confirming the qutrit
advantage is robustly attained.
At $d=4$, the optimization converges to $\lmax=3.467844\approx\vt$, with
orthogonality error $<3.3\times10^{-14}$, confirming that the full
Lov\'{a}sz bound is numerically saturated in four dimensions.

\subsection{Analytic qutrit lower bound and numerical evidence for tightness}

The optimized three-dimensional value found in all numerical runs is
\begin{equation}
  \eta_3^{\rm num}(\QC) = 1+\sqrt{5} \approx 3.23607,
  \label{eq:eta3_num}
\end{equation}
identified via PSLQ applied to the numerically obtained $\eta_3\approx3.236068$
at 50-digit precision.
The degree-2 polynomial $P(x)=x^2-2x-4$ has positive root $1+\sqrt{5}$, and
the residual test yields
$r=|P(\eta_3)|/\bigl(|P'(\eta_3)|\cdot\varepsilon_{\rm SDP}\bigr)\approx0.3\ll1$,
confirming a genuine algebraic relation.
We prove the analytic lower bound
\begin{equation}
  \eta_3(\QC) \geq 1+\sqrt{5}
  \label{eq:eta3_lb}
\end{equation}
by constructing an explicit orthogonal representation in $\mathbb{R}^3$
(Appendix~\ref{app:analytic}).
Extensive high-precision optimization and PSLQ recognition indicate that
this bound is tight, although a formal dimension-constrained upper bound
remains open.

A representative set of unit vectors $\{|v_i\rangle\}\subset\mathbb{R}^3$
achieving $\lmax=1+\sqrt{5}$ (orthogonality error $<2.5\times10^{-14}$,
obtained from 500 independent restarts using SLSQP with explicit orthogonality
constraints, all converging to the same $\lmax$ with
$|\lmax-(1+\sqrt{5})|<3\times10^{-13}$)
is given in Table~\ref{tab:unit_vecs}.

This numerically determined value connects \QC\ to the KCBS
pentagon: both $\vt(C_5)=\sqrt{5}$ and $\eta_3(\QC)=1+\sqrt{5}$
lie in $\mathbb{Q}(\sqrt{5})$.
The same $\eta_3$ value also appears for Ranks~2 and~4 (Table~\ref{tab:dim_class} in Appendix~\ref{app:tables}).

By contrast, the candidate polynomial for $\vt(\QC)$
at 15-digit precision has $r\approx20\gg1$ and is a false positive.
A reliable closed form for $\vt$ requires a high-precision solver such as
SDPA-GMP~\cite{Nakata2010} at $\geq30$ significant digits.

\subsection{Known-Graph Hierarchy}

\begin{equation}
  \De(C_5) < \De(C_7) < \De(W) < \De(\QC),
\end{equation}
with numerical values in Table~\ref{tab:hierarchy} in Appendix~\ref{app:tables}.

\section{Discussion}

\subsection{Efficiency of Contextuality Geometry}

The main surprise is that \QC\ achieves a larger contextuality
gap than $W$ with \emph{two fewer edges}.
Adding an edge to an exclusion graph imposes an additional orthogonality
constraint in Eq.~\eqref{eq:sdp}, generically reducing $\vt(G)$ while
leaving $\al(G)$ unchanged or reducing it.
Naively, fewer edges should correlate with a smaller gap.
The fact that the optimal graph is sparser than the previously leading example points
to a more efficient contextuality geometry: the four interlocking KCBS
pentagons in \QC\ encode more quantum advantage per edge than
the symmetric 3-regular structure of $W$.

The two-fold edge coverage by the four $C_5$ subgraphs
(Table~\ref{tab:c5_subgraphs} in Appendix~\ref{app:tables}) is a candidate structural explanation,
though a formal proof of this connection remains open.
This is related to properties of the Lov\'{a}sz theta under graph
combinations~\cite{Lovasz1979}: shared edges prevent full cancellation
of the individual contributions, allowing the four $C_5$ copies to
reinforce each other.

\subsection{Physical Significance of the Qutrit--Two-Qubit Distinction}

A qutrit witness ($d^*=3$) is testable on a single three-level system:
photon orbital angular momentum~\cite{Molina-Terriza2007}
or an atomic $\Lambda$-system~\cite{Hu2016}.
Such systems avoid the engineering complexity of entangled bipartite
platforms and are more resilient to certain decoherence channels.

A (numerically indicated) pure two-qubit witness ($d^*=4$) cannot exploit
contextuality in any three-dimensional Hilbert-space embedding: the
orthogonality constraints for its edge set require a four-dimensional
space.
This makes such witnesses more demanding experimentally but potentially more
useful as probes of composite-system structure.
Contextuality of this kind remains distinct from Bell nonlocality, as no
entanglement between the two qubits is required~\cite{Spekkens2005,Porto2024}.

Remarkably, $W$ and Rank~5 are non-isomorphic graphs but share identical
numerically indicated dimensional behavior ($\eta_3=\al$, $\eta_4>\al$).
This structural coincidence suggests an underlying algebraic invariant
beyond $\vt$ and $\al$ that governs dimensional rigidity.

\subsection{Noise Robustness under Depolarizing Noise}

We analyze noise robustness via the standard depolarizing channel
$\rho_\text{noisy} = v\,\rho + (1-v)\,I/d$,
where $v\in[0,1]$ is the visibility.
Since each observable is a rank-1 projector with $\mathrm{Tr}(P_i)=1$,
the observable sum under this channel becomes
$S(v) = v\,\eta_d(G) + (1-v)\,n/d$,
and contextuality is certified whenever $S(v)>\al(G)$.
Solving for the minimum visibility required yields the
\emph{critical visibility}
\begin{equation}
  v^* = \frac{\al(G) - n/d}{\eta_d(G) - n/d},
  \label{eq:vstar}
\end{equation}
with noise tolerance $1-v^*$.
Table~\ref{tab:noise} lists the values for the witnesses studied here.

\begin{table}[tp]
  \caption{Critical visibility $v^*$ and noise tolerance $1-v^*$ under
           depolarizing noise, computed from Eq.~\eqref{eq:vstar}.
           The unrestricted Lov\'{a}sz values $\vt$ are obtained from the
           high-accuracy SDP calculations in Table~\ref{tab:intervals} in Appendix~\ref{app:tables}.
           The dimension-constrained $\eta_3$ values are obtained from the
           nonconvex orthogonal-representation optimization described in
           Sec.~\ref{sec:method_dim}.}
  \label{tab:noise}
  \begin{ruledtabular}
  \begin{tabular}{lrrrr cc}
    Graph & $n$ & $\al$ & $\eta_d$ & $d$ & $v^*$ & $1-v^*$ \\
    \hline
    $C_5$ (KCBS)        & 5 & 2 & 2.23607 & 3 & 0.5854 & 0.4146 \\
    $C_7$               & 7 & 3 & 3.31767 & 3 & 0.6773 & 0.3227 \\
    \QC\ ($d=3$)        & 8 & 3 & 3.23607 & 3 & 0.5854 & 0.4146 \\
    \QC\ ($d=4$)        & 8 & 3 & 3.46784 & 4 & 0.6813 & 0.3187 \\
    Wagner $W$ ($d=4$)  & 8 & 3 & 3.41421 & 4 & 0.7071 & 0.2929 \\
  \end{tabular}
  \end{ruledtabular}
\end{table}

The most striking entry in Table~\ref{tab:noise} is the exact coincidence
$v^*(\text{KCBS})=v^*(\QC,d{=}3)$: despite having three more vertices,
\QC\ at $d=3$ tolerates up to $41.5\%$ depolarizing noise---the same
threshold as the simplest known contextuality witness.
Substituting the established values into Eq.~\eqref{eq:vstar},
\begin{align}
  v^*(\text{KCBS}) &= \frac{2-\tfrac{5}{3}}{\sqrt{5}-\tfrac{5}{3}}
                    = \frac{1}{3\sqrt{5}-5}, \label{eq:vstar_kcbs}\\[2pt]
  v^*(\QC,\,d{=}3) &= \frac{3-\tfrac{8}{3}}{(1+\sqrt{5})-\tfrac{8}{3}}
                    = \frac{1}{3\sqrt{5}-5}, \label{eq:vstar_qc}
\end{align}
confirming the equality analytically.
The coincidence is explained by a \emph{uniform shift}: going from $C_5$
to \QC, all three parameters entering Eq.~\eqref{eq:vstar} increase
by the same amount, $\Delta\al=\Delta\eta_3=\Delta(n/d)=1$.
Since the ratio in Eq.~\eqref{eq:vstar} is invariant under any common
additive shift $k$,
\begin{equation}
  \frac{(\al+k)-(n/d+k)}{(\eta+k)-(n/d+k)} = \frac{\al-n/d}{\eta-n/d},
  \label{eq:shift_invariance}
\end{equation}
the critical visibilities are identical.
The shift $\Delta(n/d)=1$ follows immediately: \QC\ has $n=8=5+d$ vertices.
Why $\Delta\al=\Delta\eta_3=1$ hold simultaneously is an open structural question
not resolved by the exhaustive search.

This equality has a direct experimental consequence: any qutrit apparatus
sufficient for a KCBS test can certify \QC\ contextuality without any
relaxation of the noise requirements.
At $d=3$ both witnesses achieve the same quantum advantage $\eta_3-\al=\sqrt{5}-2\approx0.236$;
the larger Lov\'{a}sz gap $\De(\QC)=0.468$ reflects the full
quantum potential of the graph and is accessible at $d=4$.

At $d=4$, \QC\ also strictly outperforms Wagner in noise robustness.
The numerically indicated value $\vt(W)=2+\sqrt{2}$ yields
$v^*(W,d{=}4)=1/\sqrt{2}\approx0.707$,
whereas $v^*(\QC,d{=}4)\approx0.681$, a reduction of $2.6\%$.
Any experiment achieving visibility $v\in[0.681,\,0.707]$ can therefore
certify \QC\ contextuality but not Wagner contextuality.

Finally, the $C_7$ entry in Table~\ref{tab:noise} illustrates that a larger
gap $\De$ does not guarantee better noise robustness.
Despite $\De(C_7)=0.318>\De(\text{KCBS})=0.236$, $C_7$ requires strictly
higher visibility ($v^*=0.677>0.585$).
The reason is structural: the larger vertex count raises the mixed-state
baseline $n/d=7/3$, compressing the gap between the classical bound and
the quantum value and eroding the advantage faster under noise.
The critical visibility $v^*$, not $\De$ alone, is the operationally
correct figure of merit for experimental feasibility.

\subsection{Algebraic Connection to KCBS via $\mathbb{Q}(\sqrt{5})$}

The high-confidence identification $\eta_3(\QC)=1+\sqrt{5}$
reveals a deep algebraic link to the KCBS scenario.
Both $\vt(C_5)=\sqrt{5}$ and $\eta_3(\QC)=1+\sqrt{5}$ lie in
$\mathbb{Q}(\sqrt{5})$.
Equivalently, the three-dimensional advantage of \QC\ above
the classical bound $\al=3$ is exactly $\sqrt{5}$---one unit of KCBS
advantage above the classical floor.

The adjacency spectrum reinforces this picture: the six golden-ratio
eigenvalues $\{1+\varphi,\; \varphi^{-1}(\times 2),\; 1-\varphi^{-1},\;
-\varphi(\times 2)\}$ are the spectral fingerprint of graphs built from
$C_5$ building blocks~\cite{Cvetkovic1980}, providing spectral-graph-theory
evidence that the contextuality advantage is inherited from the KCBS
subgraphs.

\section{Experimental Feasibility}
\label{sec:ExperimentalFeasibility}

From a fundamental standpoint, KCBS is the simplest qutrit witness, requiring
only five events; this also makes it attractive for experimental tests of
foundational contextuality~\cite{Klyachko2008,Hu2016}.
On the other hand, contextuality can have practical value and hence benefit
from scaling.
The \QC\ graph addresses how contextuality can be organized and enhanced in a
compact eight-event exclusivity structure.

A qutrit implementation requires eight rank-one projectors
\begin{equation}
  \Pi_i = |v_i\rangle\langle v_i|, \qquad i=0,\ldots,7,
\end{equation}
satisfying
\begin{equation}
  \Pi_i\Pi_j = 0 \qquad \text{for every edge }(i,j)\in E.
\end{equation}
The explicit construction in Appendix~\ref{app:analytic} provides such a
representation and therefore gives a direct experimental prescription.
The optimal input state is the eigenstate of
\begin{equation}
  M = \sum_i \Pi_i
\end{equation}
with largest eigenvalue. In the qutrit construction this state yields
\begin{equation}
  \sum_i \langle\Pi_i\rangle = 1+\sqrt{5}
\end{equation}
(see Eq.~\eqref{eq:psi_star} in Appendix~\ref{app:analytic}).

The qutrit realization is non-injective: two pairs of nonadjacent vertices
are represented by identical projectors ($v_0=v_6$ and $v_2=v_3$).
Operationally, this corresponds to repeated projector terms appearing in
different exclusivity contexts.
This is consistent with the graph-theoretic orthogonal-representation
framework, but it also shows why the qutrit advantage remains KCBS-like.
The larger contextuality gap of \QC\ is obtained only in a four-dimensional
realization, where the full Lov\'{a}sz value is reached numerically.

Possible platforms include spin-1 ion-trap and solid-state
systems~\cite{Klyachko2008,Kirchmair2009,Jerger2016}, photonic path or
orbital-angular-momentum
qutrits~\cite{Molina-Terriza2007,Hu2016,Lapkiewicz2011},
NMR qutrits~\cite{Budroni2022,Dogra2015}, and four-level photonic or
two-qubit systems~\cite{Amselem2009}.
In a four-dimensional implementation, \QC\ becomes experimentally more
appealing than the Wagner benchmark because it requires a strictly lower
critical visibility under depolarizing noise ($v^*\approx0.681$
vs.\ $v^*(W)\approx0.707$).

\section{Conclusion}

We have carried out an exhaustive search over all 11{,}117 connected
non-isomorphic simple graphs on eight vertices and identified
\QC---an 8-vertex, 10-edge graph containing four interlocking
KCBS pentagons---as the numerically certified unique maximizer of the contextuality gap for $n=8$.
With $\De=0.46784$, it surpasses the Wagner graph ($\De\approx0.414$) by
$\sim\!13\%$ while using two fewer edges.

Unlike the Wagner graph, which numerical evidence indicates requires a
four-dimensional Hilbert space, \QC\ is a confirmed qutrit
witness: high-precision numerics indicate that
$\eta_3(\QC)=1+\sqrt{5}$, placing this value in the same
algebraic field $\mathbb{Q}(\sqrt{5})$ as the KCBS pentagon.
In four dimensions, the full Lov\'{a}sz bound is saturated with
orthogonality error below $10^{-13}$.

A depolarizing noise analysis reveals that \QC\ at $d=3$ shares the
critical visibility $v^*=1/(3\sqrt{5}-5)\approx0.585$ of the KCBS witness,
an equality that follows analytically from a uniform shift of the graph
parameters $\al$, $\eta_3$, and $n/d$.
Any qutrit apparatus sufficient for a KCBS test can therefore certify
\QC\ contextuality without relaxing noise requirements.
At $d=4$, \QC\ also strictly outperforms the Wagner graph in noise
robustness, requiring $2.6\%$ less visibility.

The present optimization is complementary to earlier graph-based searches
for fully contextual correlations~\cite{Amselem2012} and for graphs
maximizing the relative measure $\vt(G)/\al(G)$~\cite{Canas2016}.
Those searches identify different extremal structures. Here the objective
is the absolute CSW gap $\De(G)=\vt(G)-\al(G)$, for which \QC\ is the
eight-vertex maximizer in our exhaustive search.

These results show that the optimal eight-vertex exclusivity graph is already
qutrit-contextual, while its full Lov\'{a}sz value gives a stronger
four-dimensional contextuality benchmark than the Wagner graph.
They provide concrete graph structures suitable for experimental tests in
three-level and four-level quantum systems.
The finding that the $n=8$ maximum requires \emph{fewer} edges than the
Wagner graph suggests that denser exclusion graphs do not always yield
stronger contextuality; the relationship between graph sparsity and
contextuality strength warrants systematic future investigation.

\section*{Data Availability}

All graph data derive from the publicly available McKay database
(\texttt{graph8.g6},
\url{https://users.cecs.anu.edu.au/~bdm/data/}).
Complete edge lists are given in Table~\ref{tab:edgelists} (Appendix~\ref{app:tables});
All Python scripts and data files are publicly available at
\url{https://github.com/ugurtamerphys/quad-c5-contextuality} and archived on
Zenodo~\cite{Tamer2026zenodo}.

\begin{acknowledgments}
We thank Ad\'{a}n Cabello for helpful correspondence and for drawing our
attention to earlier graph-database studies of contextuality and their
experimental realizations.
We also thank Sena Nur Tamer, Julie Zeynep Pagy, and Serta\c{c} F{\i}rat
for useful discussions.
Part of this work was carried out at Bilimler K\"{o}y\"{u}.
The numerical calculations used the open-source packages CVXPY, NetworkX,
SCS, CLARABEL, and mpmath.
The authors used artificial-intelligence-assisted language-editing tools to improve grammar,
clarity, and readability of parts of the manuscript.
All scientific ideas, calculations, analysis, interpretation of results,
conclusions, and final text were developed, checked, and approved by the
authors.
\end{acknowledgments}

\appendix
\section{Analytic Lower Bound: $\eta_3(\QC)\geq 1+\sqrt{5}$}
\label{app:analytic}

We construct an explicit orthogonal representation of $\QC$ in $\mathbb{R}^3$
achieving $\lambda_{\max}=1+\sqrt{5}$, thereby proving $\eta_3(\QC)\geq 1+\sqrt{5}$.

\subsection{Graph Structure}

The \QC\ graph has vertex set $\{0,\ldots,7\}$ and edge set
\begin{align*}
  E = \{&(0,3),(0,5),(1,4),(1,6),(2,5),\\
        &(2,6),(2,7),(3,6),(3,7),(4,7)\}.
\end{align*}
Figure~\ref{fig:qc5_graph} shows the graph together with its four embedded
$C_5$ pentagons.

\begin{figure}[h]
\centering
\begin{tikzpicture}[
  hub/.style={circle, fill=black!85, draw=black,
              minimum size=18pt, inner sep=0pt,
              font=\footnotesize\bfseries\color{white}},
  leaf/.style={circle, fill=black!35, draw=black!60,
               minimum size=18pt, inner sep=0pt,
               font=\footnotesize\bfseries},
  gedge/.style={thick, black!70},
  diagedge/.style={thick, black!50, dashed}
]
\node[hub]  (v3) at (-1.5,  1.8) {3};
\node[leaf] (v0) at (-1.5,  0.6) {0};
\node[leaf] (v5) at (-1.5, -0.6) {5};
\node[hub]  (v2) at (-1.5, -1.8) {2};
\node[hub]  (v6) at ( 1.5,  1.8) {6};
\node[leaf] (v1) at ( 1.5,  0.6) {1};
\node[leaf] (v4) at ( 1.5, -0.6) {4};
\node[hub]  (v7) at ( 1.5, -1.8) {7};
\draw[gedge] (v3)--(v0)--(v5)--(v2);
\draw[gedge] (v6)--(v1)--(v4)--(v7);
\draw[gedge] (v3)--(v6);
\draw[gedge] (v2)--(v7);
\draw[diagedge] (v3)--(v7);
\draw[diagedge] (v2)--(v6);
\end{tikzpicture}
\caption{The \QC\ graph ($n=8$, $m=10$) in the same layout as
         Fig.~\ref{fig:graphs}(a).
         Dark (hub) nodes $\{2,3,6,7\}$ have degree~3;
         light (leaf) nodes $\{0,1,4,5\}$ have degree~2.
         Dashed edges are the two $K_{2,2}$ diagonals $(2,6)$ and $(3,7)$.}
\label{fig:qc5_graph}
\end{figure}

It contains exactly four induced $C_5$ pentagons, shown in
Fig.~\ref{fig:c5_subgraphs}:
$C_5^{(1)}$: $0\text{-}3\text{-}6\text{-}2\text{-}5\text{-}0$;\quad
$C_5^{(2)}$: $0\text{-}3\text{-}7\text{-}2\text{-}5\text{-}0$;\quad
$C_5^{(3)}$: $1\text{-}4\text{-}7\text{-}2\text{-}6\text{-}1$;\quad
$C_5^{(4)}$: $1\text{-}4\text{-}7\text{-}3\text{-}6\text{-}1$.
Every edge of $\QC$ belongs to exactly two of these pentagons.

\newcommand{\drawQCsmall}[2]{%
\begin{tikzpicture}[
  vxB/.style={circle,fill=black!25,draw=black!45,
              minimum size=14pt,inner sep=0pt,
              font=\scriptsize},
  vxH/.style={circle,minimum size=14pt,inner sep=0pt,
              font=\scriptsize\bfseries\color{white}},
  egB/.style={thick,black!25},
  egH/.style={very thick},
  scale=0.62
]
\coordinate (v3) at (-1.5, 1.8);
\coordinate (v0) at (-1.5, 0.6);
\coordinate (v5) at (-1.5,-0.6);
\coordinate (v2) at (-1.5,-1.8);
\coordinate (v6) at ( 1.5, 1.8);
\coordinate (v1) at ( 1.5, 0.6);
\coordinate (v4) at ( 1.5,-0.6);
\coordinate (v7) at ( 1.5,-1.8);
\draw[egB] (v3)--(v0)--(v5)--(v2);
\draw[egB] (v6)--(v1)--(v4)--(v7);
\draw[egB] (v3)--(v6) (v2)--(v7) (v3)--(v7) (v2)--(v6);
#1
\foreach \v in {v0,v1,v2,v3,v4,v5,v6,v7}{\node[vxB] at (\v) {};}
#2
\end{tikzpicture}%
}

\begin{figure*}[!t]
\centering
\subfloat[$C_5^{(1)}$: $3\text{-}0\text{-}5\text{-}2\text{-}6$]{%
\begin{minipage}[b]{0.23\textwidth}\centering
\drawQCsmall{%
  \draw[egH,red!75!black] (v3)--(v0)--(v5)--(v2)--(v6)--(v3);
}{%
  \node[vxH,fill=red!75!black,draw=red!75!black] at (v3) {3};
  \node[vxH,fill=red!75!black,draw=red!75!black] at (v0) {0};
  \node[vxH,fill=red!75!black,draw=red!75!black] at (v5) {5};
  \node[vxH,fill=red!75!black,draw=red!75!black] at (v2) {2};
  \node[vxH,fill=red!75!black,draw=red!75!black] at (v6) {6};
}
\end{minipage}}
\hfill
\subfloat[$C_5^{(2)}$: $3\text{-}0\text{-}5\text{-}2\text{-}7$]{%
\begin{minipage}[b]{0.23\textwidth}\centering
\drawQCsmall{%
  \draw[egH,blue!75!black] (v3)--(v0)--(v5)--(v2)--(v7)--(v3);
}{%
  \node[vxH,fill=blue!75!black,draw=blue!75!black] at (v3) {3};
  \node[vxH,fill=blue!75!black,draw=blue!75!black] at (v0) {0};
  \node[vxH,fill=blue!75!black,draw=blue!75!black] at (v5) {5};
  \node[vxH,fill=blue!75!black,draw=blue!75!black] at (v2) {2};
  \node[vxH,fill=blue!75!black,draw=blue!75!black] at (v7) {7};
}
\end{minipage}}
\hfill
\subfloat[$C_5^{(3)}$: $6\text{-}1\text{-}4\text{-}7\text{-}2$]{%
\begin{minipage}[b]{0.23\textwidth}\centering
\drawQCsmall{%
  \draw[egH,green!60!black] (v6)--(v1)--(v4)--(v7)--(v2)--(v6);
}{%
  \node[vxH,fill=green!60!black,draw=green!60!black] at (v6) {6};
  \node[vxH,fill=green!60!black,draw=green!60!black] at (v1) {1};
  \node[vxH,fill=green!60!black,draw=green!60!black] at (v4) {4};
  \node[vxH,fill=green!60!black,draw=green!60!black] at (v7) {7};
  \node[vxH,fill=green!60!black,draw=green!60!black] at (v2) {2};
}
\end{minipage}}
\hfill
\subfloat[$C_5^{(4)}$: $6\text{-}1\text{-}4\text{-}7\text{-}3$]{%
\begin{minipage}[b]{0.23\textwidth}\centering
\drawQCsmall{%
  \draw[egH,orange!80!black] (v6)--(v1)--(v4)--(v7)--(v3)--(v6);
}{%
  \node[vxH,fill=orange!80!black,draw=orange!80!black] at (v6) {6};
  \node[vxH,fill=orange!80!black,draw=orange!80!black] at (v1) {1};
  \node[vxH,fill=orange!80!black,draw=orange!80!black] at (v4) {4};
  \node[vxH,fill=orange!80!black,draw=orange!80!black] at (v7) {7};
  \node[vxH,fill=orange!80!black,draw=orange!80!black] at (v3) {3};
}
\end{minipage}}
\caption{The four induced $C_5$ pentagons in \QC, using the same
         vertex layout as Fig.~\ref{fig:graphs}(a).
         Colored edges and vertices belong to each pentagon;
         every edge of \QC\ appears in exactly two of the four pentagons.}
\label{fig:c5_subgraphs}
\end{figure*}

\subsection{Explicit Construction}

Let $\varphi=(1+\sqrt{5})/2$ denote the golden ratio.
Define the following unit vectors in $\mathbb{R}^3$:
\begin{equation}
\begin{aligned}
  v_0 &= v_6 = (1,\;0,\;0),\\
  v_1 &= \bigl(0,\;\varphi^{-1/2},\;\varphi^{-1}\bigr),\\
  v_2 &= v_3 = (0,\;0,\;1),\\
  v_4 &= \bigl(-\varphi^{-1},\;-\varphi^{-3/2},\;\varphi^{-1}\bigr),\\
  v_5 &= (0,\;1,\;0),\\
  v_7 &= \bigl(-\varphi^{-1},\;\varphi^{-1/2},\;0\bigr),
\end{aligned}
\label{eq:vectors_app}
\end{equation}
where $\varphi^{-1/2}=\sqrt{(\sqrt{5}-1)/2}$, $\varphi^{-1}=(\sqrt{5}-1)/2$,
and $\varphi^{-3/2}=\varphi^{-1}\cdot\varphi^{-1/2}$.
Note $v_0=v_6$ and $v_2=v_3$; the pairs are coincident because $(0,6)\notin E$
and $(2,3)\notin E$.

All coordinates lie in $\mathbb{Q}(\sqrt{5},\,\varphi^{-1/2})$.

\begin{proposition}
\label{prop:construction}
The vectors in Eq.~\eqref{eq:vectors_app} satisfy:
(i)~$|v_i|=1$ for all $i$; and
(ii)~$\langle v_i|v_j\rangle=0$ for every edge $(i,j)\in E$.
\end{proposition}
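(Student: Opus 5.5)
The proof is a direct verification, and the plan is to reduce every check to the single golden-ratio identity
\begin{equation*}
  \varphi^{2}=\varphi+1 \quad\Longleftrightarrow\quad \varphi^{-1}+\varphi^{-2}=1,
\end{equation*}
together with its consequence $\varphi^{-2}+\varphi^{-3}=\varphi^{-1}$, obtained by multiplying the first form by $\varphi^{-1}$. I will write every squared coordinate and every product of coordinates as an integer power of $\varphi^{-1/2}$. Every quantity then becomes a short sum of powers of $\varphi^{-1}$, and the identity collapses it.

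\emph{Step 1: norms.} The vectors $v_0=v_6$, $v_2=v_3$ and $v_5$ are standard basis vectors, so they are trivially unit. The remaining three norms are:
\begin{itemize}
  \item $|v_1|^2=\varphi^{-1}+\varphi^{-2}=1$;
  \item $|v_7|^2=\varphi^{-2}+\varphi^{-1}=1$;
  \item $|v_4|^2=\varphi^{-2}+\varphi^{-3}+\varphi^{-2}=\varphi^{-2}+\varphi^{-1}=1$, using the second identity once and then the first.
\end{itemize}

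\emph{Step 2: orthogonality.} I will sort the ten edges of Eq.~\eqref{eq:qc_edges} by how much work each needs.

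The first group involves only coincident basis vectors: $(0,3)$, $(0,5)$, $(2,5)$, $(2,6)$ and $(3,6)$. Each is an inner product of two distinct basis vectors among $e_1,e_2,e_3$, so it vanishes immediately. The coincidences $v_0=v_6$ and $v_2=v_3$ are harmless here, since $(0,6)\notin E$ and $(2,3)\notin E$.

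The second group pairs a basis vector with a vector whose corresponding coordinate is zero. For $(1,6)$, $v_1$ has first coordinate $0$. For $(2,7)$ and $(3,7)$, $v_7$ has third coordinate $0$.

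The third group consists of the two edges where genuine cancellation occurs:
\begin{itemize}
  \item $\langle v_1|v_4\rangle=0-\varphi^{-1/2}\varphi^{-3/2}+\varphi^{-2}=0$;
  \item $\langle v_4|v_7\rangle=\varphi^{-2}-\varphi^{-3/2}\varphi^{-1/2}+0=0$.
\end{itemize}
Both follow from $\varphi^{-1/2}\varphi^{-3/2}=\varphi^{-2}$.

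\emph{Main obstacle.} There is no conceptual difficulty. The only places where an error could enter are the norm of $v_4$ and the two cancellations for edges $(1,4)$ and $(4,7)$. In those three computations I will track the half-integer exponents carefully and make sure the signs in $v_4$ (negative first and second coordinates) produce the required cancellations.

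Finally, I will note that all ten edges of $E$ are covered exactly once by the three groups above, so that both (i) and (ii) are fully established.
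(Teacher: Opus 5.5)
Your proposal is correct and matches the paper's proof: both verify the norms via $\varphi^{-1}+\varphi^{-2}=1$ and check the ten edge orthogonalities directly, and the only genuine cancellations occur at edges $(1,4)$ and $(4,7)$. The one cosmetic difference is that you reduce $|v_4|^2$ using $\varphi^{-2}+\varphi^{-3}=\varphi^{-1}$, whereas the paper factors it as $\varphi^{-2}(2+\varphi^{-1})$ and uses $2\varphi+1=\varphi^{3}$; both identities follow from $\varphi^{2}=\varphi+1$.
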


\begin{proof}
\textbf{Unit norms.}
Using $\varphi^{-1}+\varphi^{-2}=1$ (golden ratio identity):
$|v_1|^2=\varphi^{-1}+\varphi^{-2}=1$.
For $v_4$: $|v_4|^2=\varphi^{-2}+\varphi^{-2}\cdot\varphi^{-1}+\varphi^{-2}
=\varphi^{-2}(2+\varphi^{-1})=(2\varphi+1)/\varphi^3=\varphi^3/\varphi^3=1$,
using $2\varphi+1=\varphi^3$.
For $v_7$: $|v_7|^2=\varphi^{-2}+\varphi^{-1}=(3-\sqrt5)/2+(\sqrt5-1)/2=1$.

\textbf{Orthogonality.}
All ten constraints are verified directly:
$\langle v_0|v_3\rangle=\langle e_1|e_3\rangle=0$;
$\langle v_0|v_5\rangle=\langle e_1|e_2\rangle=0$;
$\langle v_1|v_4\rangle=-\varphi^{-2}+\varphi^{-2}=0$;
$\langle v_1|v_6\rangle=\langle v_1|e_1\rangle=0$;
$\langle v_2|v_5\rangle=\langle e_3|e_2\rangle=0$;
$\langle v_2|v_6\rangle=\langle e_3|e_1\rangle=0$;
$\langle v_2|v_7\rangle=0$ (since $v_7$ has no $z$-component);
$\langle v_3|v_6\rangle=\langle v_3|v_7\rangle=0$ (same reasoning);
$\langle v_4|v_7\rangle=\varphi^{-2}-\varphi^{-1}\cdot\varphi^{-1}=0$.
\end{proof}

\begin{table}[h]
  \caption{Representative unit vectors forming an orthogonal representation
           $\{|v_i\rangle\}\subset\mathbb{R}^3$ achieving $\eta_3(\QC)=1+\sqrt{5}$.
           Orthogonality error $\max_{(i,j)\in E}|\langle v_i|v_j\rangle|<2.5\times10^{-14}$.
           All 500 independent restarts converged to $\lmax=3.236068$.}
  \label{tab:unit_vecs}
  \begin{ruledtabular}
  \begin{tabular}{crrr}
    $i$ & $v_i^{(1)}$ & $v_i^{(2)}$ & $v_i^{(3)}$ \\
    \hline
    0 & $-0.327925$ & $-0.827319$ & $+0.456080$ \\
    1 & $+0.591738$ & $+0.502702$ & $+0.630188$ \\
    2 & $+0.265430$ & $-0.544016$ & $-0.795986$ \\
    3 & $+0.265430$ & $-0.544016$ & $-0.795986$ \\
    4 & $-0.676200$ & $+0.735118$ & $+0.048537$ \\
    5 & $+0.906649$ & $-0.139967$ & $+0.397992$ \\
    6 & $-0.072902$ & $+0.811912$ & $-0.579210$ \\
    7 & $+0.710727$ & $+0.668271$ & $-0.219730$ \\
  \end{tabular}
  \end{ruledtabular}
  \noindent Note: $v_2=v_3$ since $(2,3)\notin E$; this is a valid non-injective representation.
\end{table}

\subsection{Spectral Analysis}

Let $M=\sum_{i=0}^{7}|v_i\rangle\langle v_i|$.
With $v_0=v_6$ and $v_2=v_3$, a direct computation gives
$\mathrm{Tr}(M)=8$ and characteristic polynomial
\begin{equation}
  p(\lambda) = \lambda^3 - 8\lambda^2
    + \Bigl(\tfrac{31}{2}+\tfrac{5\sqrt{5}}{2}\Bigr)\lambda
    + 4-10\sqrt{5}.
\end{equation}
Substituting $\lambda=1+\sqrt{5}$ and using $(1+\sqrt{5})^2=6+2\sqrt{5}$,
$(1+\sqrt{5})^3=16+8\sqrt{5}$:
\begin{align}
p(1+\sqrt{5}) &= (16+8\sqrt{5}) - (48+16\sqrt{5})\nonumber\\
             &\quad+ (28+18\sqrt{5}) + (4-10\sqrt{5}) = 0.\nonumber
\end{align}
Factoring out $(\lambda-(1+\sqrt{5}))$ yields
\begin{equation}
  p(\lambda) = \bigl(\lambda-(1+\sqrt{5})\bigr)
               \!\left(\lambda-\tfrac{7-\sqrt{5}}{2}\right)^{\!2},
  \label{eq:charpoly_app}
\end{equation}
where the double root is $(7-\sqrt{5})/2\approx2.382$.
Since $(1+\sqrt{5})-(7-\sqrt{5})/2=(3\sqrt{5}-5)/2>0$, the largest
eigenvalue is $\lambda_{\max}(M)=1+\sqrt{5}$, proving $\eta_3(\QC)\geq 1+\sqrt{5}$.

\subsection{Critical State}

The state attaining the maximum is the normalized eigenvector
\begin{equation}
  |\psi_\star\rangle = \frac{1}{5^{1/4}}
  \begin{pmatrix} -1\\ \sqrt{\sqrt{5}-2}\\ 1 \end{pmatrix},
  \label{eq:psi_star}
\end{equation}
which satisfies $M|\psi_\star\rangle=(1+\sqrt{5})|\psi_\star\rangle$.
Therefore
\begin{equation}
  \sum_{i=0}^{7}|\langle\psi_\star|v_i\rangle|^2
  = \langle\psi_\star|M|\psi_\star\rangle = 1+\sqrt{5}.
\end{equation}

\subsection{Note on Non-Injectivity}

We emphasize that the qutrit construction in Appendix~\ref{app:analytic}
is a valid orthogonal representation of the exclusivity graph, but it is
not injective: two
nonadjacent vertex pairs are represented by identical projectors,
$v_0=v_6$ and $v_2=v_3$.
Operationally, this corresponds to repeated projector terms appearing in
different exclusivity contexts.
This explains why the qutrit contextuality margin remains KCBS-like,
$\eta_3-\al=\sqrt{5}-2$.
The genuinely larger gap, $\vt-\al\simeq0.46784$, is obtained only when
the full Lov\'{a}sz value is accessed, which our numerical optimization
realizes in four dimensions.
Thus \QC\ serves as an eight-event graph whose qutrit realization
retains KCBS-type contextuality, while its full quantum realization
exceeds the Wagner benchmark.

\clearpage
\section{Supplementary Tables}
\label{app:tables}

\begin{table}[!ht]
  \caption{Numerical $\vt$ ranges from $5\times2$ independent solver
           runs (SCS + CLARABEL). The gap between the \QC\ lower endpoint
           and the Rank-2 upper endpoint is $2.94\times10^{-2}$, far
           larger than the observed solver residuals.
           This provides high-precision numerical evidence for uniqueness
           of the maximizer at the reported precision.}
  \label{tab:intervals}
  \begin{ruledtabular}
  \scriptsize
  \begin{tabular}{lccc}
    Graph & Lower bound & Upper bound & Half-width \\
    \hline
    \QC\ (Rank 1) & 3.46784373 & 3.46784378 & $2.4\times10^{-8}$ \\
    Rank 2        & 3.43844718 & 3.43844719 & $6.3\times10^{-10}$ \\
    Wagner $W$    & 3.41421356 & 3.41421364 & $3.7\times10^{-8}$ \\
  \end{tabular}
  \end{ruledtabular}
\end{table}
\begin{table}[!ht]
  \caption{The four induced $C_5$ subgraphs of \QC.
           Every edge of the full graph belongs to exactly two pentagons.}
  \label{tab:c5_subgraphs}
  \begin{ruledtabular}
  \scriptsize
  \begin{tabular}{cll}
    Label & Vertices & Edges \\
    \hline
    $C_5^{(1)}$ & $\{0,2,3,5,6\}$ & $(0,3),(0,5),(2,5),(2,6),(3,6)$ \\
    $C_5^{(2)}$ & $\{0,2,3,5,7\}$ & $(0,3),(0,5),(2,5),(2,7),(3,7)$ \\
    $C_5^{(3)}$ & $\{1,2,4,6,7\}$ & $(1,4),(1,6),(2,6),(2,7),(4,7)$ \\
    $C_5^{(4)}$ & $\{1,3,4,6,7\}$ & $(1,4),(1,6),(3,6),(3,7),(4,7)$ \\
  \end{tabular}
  \end{ruledtabular}
\end{table}
\begin{table}[!ht]
  \caption{Adjacency spectrum of \QC\ with $\varphi=(1+\sqrt{5})/2$.}
  \label{tab:spectrum}
  \begin{ruledtabular}
  \scriptsize
  \begin{tabular}{rccl}
    \multicolumn{1}{c}{$\lambda$} &
    Closed form & Mult. & In $\mathbb{Q}(\sqrt{5})$? \\
    \hline
    $+2.618034$ & $1+\varphi=\varphi^2$ & 1 & Yes \\
    $+1.302776$ & ---                    & 1 & No  \\
    $+0.618034$ & $\varphi^{-1}$         & 2 & Yes \\
    $+0.381966$ & $1-\varphi^{-1}$       & 1 & Yes \\
    $-1.618034$ & $-\varphi$             & 2 & Yes \\
    $-2.302776$ & ---                    & 1 & No  \\
  \end{tabular}
  \end{ruledtabular}
\end{table}
\begin{table}[!ht]
  \caption{Dimension classification of the six leading $n=8$ contextuality
           witnesses. $\eta_3$ is a lower bound on $\eta_3(G)$; $d^*$ is
           numerically indicated (qutrit cases are confirmed; two-qubit
           cases are indicated by 300-restart convergence to $\lmax=\al$).}
  \label{tab:dim_class}
  \begin{ruledtabular}
  \scriptsize
  \begin{tabular}{crrcl}
    Rank &
    \multicolumn{1}{c}{$\De$} &
    \multicolumn{1}{c}{$\eta_3$ (lb)} &
    $d^*$ & Type \\
    \hline
    1 (\QC) & 0.46784 & 3.23607 & 3            & Qutrit witness \\
    2          & 0.43845 & 3.23607 & 3            & Qutrit witness \\
    3 (Wagner) & 0.41421 & 3.00000 & 4$^\dagger$  & Indicated 2-qubit \\
    4          & 0.37228 & 3.23607 & 3            & Qutrit witness \\
    5          & 0.37228 & 3.00000 & 4$^\dagger$  & Indicated 2-qubit \\
    6          & 0.37228 & 3.33804 & 3            & Qutrit witness \\
  \end{tabular}
  \end{ruledtabular}
  \noindent $^\dagger$ Numerically indicated; upper bound not proved here.
\end{table}
\begin{table}[!ht]
  \caption{Orthogonal representation results.
           $\lmax$ is a lower bound on $\eta_d(G)$; orth.\ error is the
           maximum $|\langle v_i|v_j\rangle|$ over edge pairs at
           convergence.
           ``Adv.''\ means $\lmax>\al=3$.}
  \label{tab:orth_reps}
  \begin{ruledtabular}
  \scriptsize
  \begin{tabular}{lrrcl}
    Configuration &
    \multicolumn{1}{c}{$d$} &
    \multicolumn{1}{c}{$\lmax$} &
    Adv. & Orth.\ error \\
    \hline
    \QC, $d=3$ & 3 & 3.236068 & Yes & $<2.3\times10^{-14}$ \\
    \QC, $d=4$ & 4 & 3.467844 & Yes & $<3.3\times10^{-14}$ \\
    Wagner,  $d=3$ & 3 & 3.000000 & No  & $<6.0\times10^{-15}$ \\
    Wagner,  $d=4$ & 4 & 3.414214 & Yes & $<5.8\times10^{-14}$ \\
    Rank~5,  $d=3$ & 3 & 3.000000 & No  & $<6.0\times10^{-15}$ \\
    Rank~5,  $d=4$ & 4 & $>3$     & Yes & $<10^{-13}$ \\
  \end{tabular}
  \end{ruledtabular}
\end{table}
\begin{table}[!ht]
  \caption{Contextuality gap hierarchy for canonical small graphs.
           $d^*=3$ is proved for $C_5$ and $C_7$, and confirmed by
           $\eta_3>\al$ for \QC; $d^*=4$ for $W$ is numerically indicated.}
  \label{tab:hierarchy}
  \begin{ruledtabular}
  \scriptsize
  \begin{tabular}{lcrrrr}
    Graph & $n$ &
    \multicolumn{1}{c}{$\al$} &
    \multicolumn{1}{c}{$\vt$} &
    \multicolumn{1}{c}{$\De$} &
    $d^*$ \\
    \hline
    $C_5$ (KCBS)    & 5 & 2 & 2.23607 & 0.23607 & 3 \\
    $C_7$           & 7 & 3 & 3.31767 & 0.31767 & 3 \\
    Wagner $W$      & 8 & 3 & 3.41421 & 0.41421 & 4$^\dagger$ \\
    \textbf{\QC} & 8 & 3 & 3.46784 & \textbf{0.46784} & 3 \\
  \end{tabular}
  \end{ruledtabular}
  \noindent $^\dagger$ Numerically indicated.
\end{table}
\begin{table*}[!t]
  \caption{Full primal--dual certification for the top-20 connected $n=8$
           graphs (CLARABEL solver).
           $\vt_\text{CL}$: primal objective value;
           $r_p = |\mathrm{Tr}\,X-1|$: trace-constraint residual;
           $r_\text{edge} = \max_{(i,j)\in E}|X_{ij}|$: maximum
           edge-constraint violation;
           $\lmin(X)$: smallest eigenvalue of the primal matrix;
           $r_d = y - \vt_\text{CL}$: primal--dual objective gap;
           $\lmin(S)$: smallest eigenvalue of the dual slack matrix
           $S = yI + \sum_{(i,j)\in E} z_{ij} B_{ij} - J$.
           All solver statuses are \texttt{optimal};
           inter-solver discrepancy
           $|\vt_\text{SCS}-\vt_\text{CL}|<10^{-7}$ for all 50 graphs.}
  \label{tab:certification}
  \begin{ruledtabular}
  \scriptsize
  \begin{tabular}{clrccccc}
    Rank & Note &
    $\vt_\text{CL}$ &
    \multicolumn{1}{c}{$r_p$} &
    \multicolumn{1}{c}{$r_\text{edge}$} &
    \multicolumn{1}{c}{$\lmin(X)$} &
    \multicolumn{1}{c}{$r_d$} &
    \multicolumn{1}{c}{$\lmin(S)$} \\
    \hline
     1 & \QC\          & 3.46784 & $+3.7\!\times\!10^{-17}$ & $+3.7\!\times\!10^{-17}$ & $-5.6\!\times\!10^{-9}$ & $-8.5\!\times\!10^{-9}$ & $-2.5\!\times\!10^{-8}$ \\
     2 &              & 3.43845 & $+1.2\!\times\!10^{-19}$ & $+1.2\!\times\!10^{-19}$ & $-1.3\!\times\!10^{-9}$ & $-1.2\!\times\!10^{-9}$ & $-6.8\!\times\!10^{-9}$ \\
     3 & Wagner $W$   & 3.41421 & $+2.1\!\times\!10^{-19}$ & $+2.1\!\times\!10^{-19}$ & $-5.7\!\times\!10^{-9}$ & $+2.5\!\times\!10^{-9}$ & $-1.2\!\times\!10^{-8}$ \\
     4 &              & 3.37228 & $+4.4\!\times\!10^{-16}$ & $+3.7\!\times\!10^{-17}$ & $-4.8\!\times\!10^{-9}$ & $-5.6\!\times\!10^{-9}$ & $+4.9\!\times\!10^{-9}$ \\
     5 &              & 3.37228 & $+4.2\!\times\!10^{-18}$ & $+4.3\!\times\!10^{-18}$ & $-2.6\!\times\!10^{-9}$ & $-3.3\!\times\!10^{-9}$ & $+2.0\!\times\!10^{-9}$ \\
     6 &              & 3.37228 & $+8.8\!\times\!10^{-15}$ & $+1.8\!\times\!10^{-15}$ & $-3.9\!\times\!10^{-10}$ & $-4.6\!\times\!10^{-10}$ & $-2.6\!\times\!10^{-9}$ \\
     7 &              & 2.34315 & $+3.1\!\times\!10^{-15}$ & $+9.4\!\times\!10^{-16}$ & $-6.6\!\times\!10^{-11}$ & $+1.3\!\times\!10^{-11}$ & $-1.5\!\times\!10^{-10}$ \\
     8 &              & 3.33804 & $+1.4\!\times\!10^{-14}$ & $+1.4\!\times\!10^{-14}$ & $-6.6\!\times\!10^{-10}$ & $-1.6\!\times\!10^{-9}$ & $-3.2\!\times\!10^{-9}$ \\
     9 &              & 2.33804 & $+2.2\!\times\!10^{-16}$ & $+3.4\!\times\!10^{-18}$ & $-2.3\!\times\!10^{-9}$ & $+3.0\!\times\!10^{-10}$ & $-3.8\!\times\!10^{-9}$ \\
    10 &              & 2.33333 & $+5.6\!\times\!10^{-18}$ & $+5.7\!\times\!10^{-18}$ & $-7.7\!\times\!10^{-10}$ & $-3.4\!\times\!10^{-10}$ & $-2.5\!\times\!10^{-9}$ \\
    11 &              & 3.33097 & $+1.9\!\times\!10^{-19}$ & $+1.9\!\times\!10^{-19}$ & $-1.1\!\times\!10^{-9}$ & $-2.0\!\times\!10^{-9}$ & $+1.9\!\times\!10^{-9}$ \\
    12 &              & 3.32922 & $+4.1\!\times\!10^{-18}$ & $+4.1\!\times\!10^{-18}$ & $-4.3\!\times\!10^{-9}$ & $-7.5\!\times\!10^{-9}$ & $+7.8\!\times\!10^{-9}$ \\
    13 &              & 3.31767 & $+9.2\!\times\!10^{-19}$ & $+9.2\!\times\!10^{-19}$ & $-2.2\!\times\!10^{-9}$ & $-5.2\!\times\!10^{-9}$ & $-1.7\!\times\!10^{-8}$ \\
    14 &              & 3.31767 & $+9.4\!\times\!10^{-19}$ & $+9.4\!\times\!10^{-19}$ & $-1.7\!\times\!10^{-9}$ & $-4.9\!\times\!10^{-9}$ & $-9.5\!\times\!10^{-9}$ \\
    15 &              & 3.31767 & $+2.2\!\times\!10^{-16}$ & $+2.7\!\times\!10^{-18}$ & $-1.5\!\times\!10^{-9}$ & $-2.1\!\times\!10^{-9}$ & $-5.6\!\times\!10^{-9}$ \\
    16 &              & 3.31767 & $+4.9\!\times\!10^{-11}$ & $+4.9\!\times\!10^{-11}$ & $-2.8\!\times\!10^{-9}$ & $-8.1\!\times\!10^{-9}$ & $-2.0\!\times\!10^{-8}$ \\
    17 &              & 3.31767 & $+3.1\!\times\!10^{-10}$ & $+3.1\!\times\!10^{-10}$ & $-1.9\!\times\!10^{-9}$ & $-3.4\!\times\!10^{-9}$ & $-6.5\!\times\!10^{-9}$ \\
    18 &              & 3.31767 & $+3.1\!\times\!10^{-11}$ & $+3.1\!\times\!10^{-11}$ & $-3.3\!\times\!10^{-9}$ & $-4.4\!\times\!10^{-9}$ & $-1.7\!\times\!10^{-8}$ \\
    19 &              & 3.31767 & $+1.2\!\times\!10^{-13}$ & $+2.3\!\times\!10^{-15}$ & $-8.1\!\times\!10^{-10}$ & $-2.3\!\times\!10^{-9}$ & $-2.2\!\times\!10^{-9}$ \\
    20 &              & 3.31767 & $+4.1\!\times\!10^{-11}$ & $+4.1\!\times\!10^{-11}$ & $-8.9\!\times\!10^{-10}$ & $-1.7\!\times\!10^{-9}$ & $-7.6\!\times\!10^{-10}$ \\
  \end{tabular}
  \end{ruledtabular}
\end{table*}
\begin{table*}[!t]
  \caption{Edge lists for the six highest-ranked connected 8-vertex
           contextuality witnesses.  All $\De$ values from CLARABEL.}
  \label{tab:edgelists}
  \begin{ruledtabular}
  \scriptsize
  \begin{tabular}{cllll}
    Rank & $\De$ & $d^*$ & $|E|$ & Edge list \\
    \hline
    1 & 0.46784 & 3 & 10 &
      $(0,3),(0,5),(1,4),(1,6),(2,5),(2,6),(2,7),(3,6),(3,7),(4,7)$ \\
    2 & 0.43845 & 3 & 11 &
      $(0,3),(0,5),(1,4),(1,5),(1,7),(2,5),(2,6),(2,7),(3,6),(3,7),(4,6)$ \\
    3 (Wagner $W$) & 0.41421 & 4$^\dagger$ & 12 &
      $(0,1),(1,2),(2,3),(3,4),(4,5),(5,6),(6,7),(7,0),(0,4),(1,5),(2,6),(3,7)$ \\
    4 & 0.37228 & 3 & 11 &
      $(0,3),(0,5),(0,7),(1,4),(1,6),(2,5),(2,6),(2,7),(3,6),(3,7),(4,7)$ \\
    5 & 0.37228 & 4$^\dagger$ & 12 &
      $(0,3),(0,5),(1,4),(1,5),(1,6),(2,5),(2,6),(2,7),(3,6),(3,7),(4,7),(5,7)$ \\
    6 & 0.37228 & 3 & 10 &
      $(0,3),(0,4),(0,7),(1,4),(1,5),(2,5),(2,6),(2,7),(3,6),(4,7)$ \\
  \end{tabular}
  \end{ruledtabular}
  \noindent $^\dagger$ Numerically indicated; see the Dimension Classification subsection.
\end{table*}


\end{document}